\newtheorem{lemma}{Lemma}
\newtheorem{theorem}{Theorem}
\newtheorem{corollary}{Corollary}
\newtheorem{definition}{Definition}
\DeclareMathOperator{\diag}{diag}
\DeclareMathOperator{\poly}{poly}
\DeclareMathOperator{\tr}{tr}
\DeclareMathOperator*{\E}{\mathbb E}
\begin{document}

\bstctlcite{BSTcontrolNoURL}

\title{Approximating local properties by tensor network states with constant bond dimension}

\author{Yichen Huang (黄溢辰)\orcidlink{0000-0002-8496-9251}
\thanks{A related paper by Dalzell and Brand\~ao \cite{DB19} appeared on arXiv simultaneously. I wrote a Perspective \cite{Hua19} on Ref.~\cite{DB19}. Both their paper and the present work were accepted as talks and merged into a single talk given by Dalzell (video DOI: \href{https://doi.org/10.12351/ks.2001.0304}{10.12351/ks.2001.0304}) at the 23rd Annual Conference on Quantum Information Processing (QIP 2020).}
\thanks{The author was with Microsoft Research AI, Redmond, WA 98052 USA (e-mail: \href{mailto:huangtbcmh@gmail.com}{huangtbcmh@gmail.com}).}
}

\begin{CJK}{UTF8}{gbsn}

\maketitle

\end{CJK}

\begin{abstract}

Classical simulation of quantum many-body systems is a fundamental challenge due to their exponentially large Hilbert spaces. Tensor network states are a powerful ansatz to efficiently represent many physically relevant quantum states. A key question is the bond dimension---which determines the number of parameters in the ansatz---required to approximate all local properties to accuracy $\delta$. In one dimension, we prove that an area law for the R\'enyi entanglement entropy $R_\alpha$ with index $\alpha<1$ implies a matrix product state representation with bond dimension $\poly(1/\delta)$. For (at most constant-fold degenerate) ground states of one-dimensional gapped Hamiltonians, a bond dimension almost linear in $1/\delta$ suffices. In two dimensions, an area law for $R_\alpha(\alpha<1)$ implies a projected entangled pair state representation with bond dimension $e^{O(1/\delta)}$. In both one and two dimensions, analogous results are obtained for states with logarithmic corrections to the area law. These findings rigorously justify the common practice of using a system-size-independent bond dimension in tensor network simulations.

\end{abstract}

\begin{IEEEkeywords}
Area law, entanglement entropy, matrix product state, projected entangled pair state, tensor network.
\end{IEEEkeywords}

\begin{figure*}[h]
  \centering
  \begin{tikzpicture}[
    site/.style={circle, draw=black, fill=white, minimum size=8, inner sep=0},
    tensor/.style={rectangle, draw=black, fill=black!10, minimum size=8, inner sep=0},
    bond/.style={line width=1.2},
    leg/.style={line width=0.8},
    highlight/.style={draw=black!70, thick, dashed, rounded corners, inner sep=4},
    >=Stealth
  ]

    \begin{scope}[shift={(-5,0)}]
      \node at (-2.7,0) {\dots};
      \foreach \i in {1,...,8} {
        \node[site] (L\i) at (0.6*\i-2.7,0) {};
      }
      \node at (2.7,0) {\dots};
      \node at (0,0.8) {1D spin chain: state $|\psi\rangle$};
      \node at (0,-0.8) {local region};
      \draw[->] (0,-0.6) -- (0,-0.3);
      \node[highlight, fit=(L4)(L5)] {};
      \node at (0,-1.6) {For $|\psi\rangle$ satisfying R\'enyi area law:};
    \end{scope}
    
    \node[font=\large] (arrow) at (0,0) {$\longrightarrow$};
    \node[above] {approximated by};

    \begin{scope}[shift={(5,0)}]
      \node at (-2.7,0) {\dots};
      \foreach \i in {1,...,8} {
        \node[site] (R\i) at (0.6*\i-2.7,0) {};
      }
      \node at (2.7,0) {\dots};
      \node at (-2.7,-0.8) {\dots};
      \foreach \i in {1,...,8} {
        \node[tensor] (T\i) at (0.6*\i-2.7,-0.8) {};
        \draw[leg] (R\i) -- (T\i);
      }
      \node at (2.7,-0.8) {\dots};
      \draw[bond] (-2.4,-0.8) -- (T1);
      \foreach \i in {1,...,7} {
        \pgfmathtruncatemacro\j{\i+1}
        \draw[bond] (T\i) -- (T\j);
      }
      \draw[bond] (T8) -- (2.4,-0.8);
      \node[highlight, fit=(R4)(R5)] {};
      \node at (0,0.8) {matrix product state $|\phi\rangle$};
      \node at (0,-1.6) {bond dimension $D$};
      \draw[->] (0,-1.4) -- (0,-0.85);
    \end{scope}

    \node at (0,-2.4) {system-size-independent $D=\poly(1/\delta)$ ensures $\big|\langle\psi|\hat O|\psi\rangle-\langle\phi|\hat O|\phi\rangle\big|\le\delta,~\forall$ local observable $\hat O$};
  \end{tikzpicture}
  \caption{Constant-bond-dimension tensor networks for local properties (1D matrix-product-state illustration). In a spin chain, a state $|\psi\rangle$ (left) satisfying an area law for the R\'enyi entanglement entropy with index $\alpha<1$ can be approximated by a matrix product state $|\phi\rangle$ (right) with a system-size-independent bond dimension $D=\poly(1/\delta)$, such that all local properties are accurate up to error $\delta$ (Theorem \ref{t1}). Analogous results are also obtained for other classes of states: ground states of 1D gapped Hamiltonians (Theorem \ref{t2}), area-law states in 2D (Theorem \ref{2d}), and states with logarithmic corrections to the area law in both 1D (Corollary \ref{c2}) and 2D (Corollary \ref{c1}).}
\end{figure*}

\section{Introduction}

Classical simulation of quantum many-body systems is a fundamental pursuit in modern physics, yet it presents a formidable challenge due to the ``curse of dimensionality''---the exponential growth of the Hilbert space dimension with system size. Thus, a complete description of generic quantum states is computationally prohibitive except for relatively small systems. Fortunately, many states of significant physical interest, such as ground states of local Hamiltonians, are non-generic. They often exhibit substantial structure that confines them to a small, manageable corner of the vast Hilbert space. Tensor network states, notably matrix product states (MPS) \cite{PVWC07, FNW92} for one spatial dimension and projected entangled pair states (PEPS) \cite{VC04} for two and higher dimensions, are powerful, versatile, and widely adopted ans\"atze for efficiently parameterizing these structured states. A crucial aspect of a tensor network representation is its ``bond dimension,'' which dictates the number of parameters in the ansatz and is a primary factor governing the required computational resources---memory and runtime. Therefore, a fundamental and practically vital question is to determine the bond dimension that guarantees an accurate approximation to the physical properties of interest, especially local observables.

As a variational algorithm over MPS \cite{Sch11}, the celebrated density matrix renormalization group (DMRG) \cite{Whi92, Whi93} has established itself as the leading numerical method for simulating one-dimensional (1D) quantum systems. Complementing this practical success, significant progress has been made in establishing the theoretical foundations of DMRG. For ground states of 1D gapped Hamiltonians, notable rigorous results include
\begin{itemize} [noitemsep]
\item area laws for the entanglement entropy \cite{Has07, AKLV13, Hua14a};
\item efficient MPS approximations to the wave function \cite{Has07, AKLV13, Hua14a}, i.e., there exists an MPS with bond dimension less than a polynomial in the system size such that the fidelity approaches $1$;
\item efficient algorithms for finding such MPS approximations \cite{LVV15, Hua14b, CF16, Hua15c, ALVV17}.
\end{itemize}
Crucially, an area law for the von Neumann entanglement entropy in 1D does not necessarily imply efficient MPS approximations \cite{SWVC08}.

In practice, accurate results can often be obtained using MPS with quite small bond dimension. Extreme examples are the infinite DMRG \cite{McC08} and infinite (imaginary-)time-evolving block decimation \cite{Vid07} algorithms, which work directly in the thermodynamic limit. It was observed that a constant bond dimension is sufficient for computing local expectation values. This observation cannot be explained by Refs.~\cite{Has07, AKLV13, Hua14a}, for the upper bounds on the bond dimension proved there become infinite in the thermodynamic limit.

It is important to rigorously justify the empirical success of using a constant bond dimension. To this end, we prove that
\begin{enumerate} [noitemsep]
\item In 1D, an area law for the R\'enyi entanglement entropy $R_\alpha$ with index $\alpha<1$ implies an MPS representation with bond dimension $\poly(1/\delta)$ such that all local properties are approximated to additive error $\le\delta$.
\item For (at most constant-fold degenerate) ground states of 1D gapped Hamiltonians, a bond dimension almost linear in $1/\delta$ suffices.
\end{enumerate}

Analogous results with even better upper bounds on the bond dimension were obtained \cite{Hua15a, SV17} for (positive semidefinite) matrix product operator (MPO) \cite{VGC04, ZV04} approximations (MPO are a generalization of MPS to operators). However, MPS are preferable to MPO, for the latter are more difficult to work with in both theory \cite{KGE14} and practice.

In 2D, rigorous results are scarce. In particular, it is an open problem whether ground states of gapped Hamiltonians always satisfy an area law \cite{ECP10, Osb12, GHLS15, Hua15b}, and an area law (in its strongest formulation) does not imply efficient tensor network approximations to the wave function \cite{GE16, Hua20ISIT, Hua21JCP}. PEPS are a generalization of MPS to higher dimensions. We prove that
\begin{enumerate} [resume*]
\item In 2D, an area law for $R_\alpha(\alpha<1)$ implies a PEPS representation with bond dimension $e^{O(1/\delta)}$ such that all local properties are approximated to additive error $\le\delta$.
\end{enumerate}

Analogous results for approximating local properties are obtained for states (e.g., many critical ground states \cite{HLW94, VLRK03, LRV04, CC04, CC09, Wol06, GK06}) with logarithmic corrections to the area law: 
\begin{enumerate} [resume*]
\item In 1D, there exists an MPS approximation with bond dimension $\poly(1/\delta)$. 
\item In 2D, there exists a PEPS approximation with bond dimension $e^{\tilde O(1/\delta)}$.\footnote{To simplify the notation, we use a tilde to hide a polylogarithmic factor, e.g., $\tilde O(x):=O(x\poly\log x)$.}
\end{enumerate}

Our results establish the existence of tensor network representations with the specified bond dimensions for a given local approximation guarantee in various physical scenarios. The computation of local expectation values, however, is a distinct problem. In particular, it is not well understood under what conditions local properties of ground states can be rigorously computed in time independent of the system size \cite{Hua20}.

\section{Results in one dimension}

In this and the next sections, we restrict ourselves to 1D with open boundary conditions. Consider a chain of $N$ qu\emph{d}its (or spin-$\frac{d-1}{2}$'s) with $d=\Theta(1)$. Let $\mathcal H_i=\mathbb C^d$ be the Hilbert space of qudit $i$, and define
\begin{equation}
\mathcal H_{[i,j]}=\bigotimes_{k=\max\{i,1\}}^{\min\{j,N\}}\mathcal H_k
\end{equation}
as the Hilbert space of qudits with indices in the interval $[i,j]$.

\begin{definition} [matrix product states \cite{PVWC07, FNW92}]
Let $\{|j_i\rangle\}_{j_i=1}^d$ be the computational basis of $\mathcal H_i$, and let $\{D_i\}_{i=0}^N$ be a sequence of positive integers with $D_0=D_N=1$. An MPS $|\phi\rangle$ has the form
\begin{equation} \label{mps}
|\phi\rangle=\sum_{j_1,j_2,\ldots,j_N=1}^dA_{j_1}^{[1]}A_{j_2}^{[2]}\cdots A_{j_N}^{[N]}|j_1j_2\cdots j_N\rangle,
\end{equation}
where $A_{j_i}^{[i]}$ is a matrix of size $D_{i-1}\times D_i$. Terminology: $D_i$ is called the bond dimension across the cut $\mathcal H_{[1,i]}\otimes\mathcal H_{[i+1,N]}$, and $D:=\max_{0\le i\le N}D_i$ is the bond dimension of the MPS $|\phi\rangle$.
\end{definition}

Any state can be expressed exactly as an MPS with exponential bond dimension $D\le d^{\lfloor N/2\rfloor}$, where $\lfloor\cdot\rfloor$ denotes the floor function.

\begin{definition} [entanglement entropy]
The R\'enyi entanglement entropy $R_\alpha$ with index $\alpha\in(0,1)\cup(1,+\infty)$ of a pure state $|\psi\rangle$ between a subsystem $X$ and its complement $\overline X$ is defined as
\begin{equation}
R_\alpha(\rho_X)=\frac{1}{1-\alpha}\ln\tr(\rho_X^\alpha),
\end{equation}
where $\rho_X=\tr_{\overline X}(|\psi\rangle\langle\psi|)$ is the reduced density matrix. The von Neumann entanglement entropy is
\begin{equation}
S(\rho_X)=-\tr(\rho_X\ln\rho_X)=\lim_{\alpha\to1}R_\alpha(\rho_X).
\end{equation}
\end{definition}

It is well known (and not difficult to prove) that $R_\alpha$ is monotonically non-increasing with respect to $\alpha$, i.e., $R_\alpha(\rho_X)\ge R_\beta(\rho_X)$ if $\alpha\le\beta$.

\begin{theorem} \label{t1}
Suppose that $|\psi\rangle$ satisfies an area law for the R\'enyi entanglement entropy $R_\alpha$ with $\alpha<1$, in the sense that the entropy across any cut is $O(1)$. Then for any $\delta>0$, there exists an MPS $|\phi\rangle$ with bond dimension
\begin{equation}
D=\tilde O\left(\delta^{-1-\frac{3\alpha}{1-\alpha}}\right)
\end{equation}
such that $|\langle\psi|\hat O|\psi\rangle-\langle\phi|\hat O|\phi\rangle|\le\delta$ for any local observable $\hat O$ with $\|\hat O\|\le1$.
\end{theorem}

\begin{theorem} \label{t2}
Suppose that $|\psi\rangle$ is the (at most constant-fold degenerate) ground state of a 1D Hamiltonian with a constant energy gap $\Delta$. Then for any $\delta>0$, there exists an MPS $|\phi\rangle$ with bond dimension
\begin{equation}
D=e^{\tilde O\left(\sqrt[4]{\frac{1}{\Delta}\log^3\frac{1}{\delta}}\right)}/\delta=O(\delta^{-1-\gamma}),\quad\forall\gamma>0
\end{equation}
such that $|\langle\psi|\hat O|\psi\rangle-\langle\phi|\hat O|\phi\rangle|\le\delta$ for any local observable $\hat O$ with $\|\hat O\|\le1$.
\end{theorem}

It is not necessary to assume exact ground-state degeneracy. The degeneracy condition enters the proof of Theorem \ref{t2} solely via Lemma \ref{gap} (below), which, in particular, remains valid in the presence of an exponentially small $e^{-\Omega(N)}$ splitting of the degeneracy \cite{Hua14a} (as is typically observed in physical systems).

\section{Proofs for one dimension}

\subsection{Preliminaries}

\begin{lemma} [canonical form \cite{PVWC07}]
Any MPS can be transformed into the so-called canonical form without increasing the bond dimension across any cut such that
\begin{gather}
\sum_{j=1}^dA_j^{[i]}A_j^{[i]\dag}=I,\quad\sum_{j=1}^{d}A_j^{[i]\dag}\Lambda^{[i-1]}A_j^{[i]}=\Lambda^{[i]},\\
\Lambda^{[i]}=\diag\left\{\left(\lambda_1^{[i]}\right)^2,\left(\lambda_2^{[i]}\right)^2,\ldots\right\},
\end{gather}
where $\lambda_1^{[i]}\ge\lambda_2^{[i]}\ge\cdots>0$ are the Schmidt coefficients of the MPS across the cut $\mathcal H_{[1,i]}\otimes\mathcal H_{[i+1,N]}$ in non-ascending order.
\end{lemma}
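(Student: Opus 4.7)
The plan is to construct the canonical form from a given MPS by a sweep of singular value decompositions (SVDs) along the chain, followed by unitary bond-gauge rotations that expose the Schmidt coefficients.

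\textbf{Step 1 (right-isometry sweep).} Starting from the given MPS, sweep from site $n$ down to site $2$. At site $i$, reshape the tensor into a matrix $M^{[i]}$ of size $D_{i-1}\times(dD_i)$ by merging the physical index $j$ with the right bond; compute the thin SVD $M^{[i]}=U\Sigma V^\dag$; reshape $V^\dag$ back into new site-$i$ tensors $\tilde A^{[i]}_j$, which by construction satisfy $\sum_j \tilde A^{[i]}_j \tilde A^{[i]\dag}_j = I$; and absorb $U\Sigma$ into $A^{[i-1]}$. The new bond dimension across the cut immediately to the left of site $i$ equals $\mathrm{rank}(M^{[i]})\le D_{i-1}$, so the bond dimension never increases across any cut.

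\textbf{Step 2 (identify $\Lambda^{[i]}$ with the Schmidt weights).} Set $\Lambda^{[0]}:=1$ and define recursively $\Lambda^{[i]}:=\sum_j A^{[i]\dag}_j\Lambda^{[i-1]}A^{[i]}_j$. By induction on $i$, using the right-isometry condition established in Step~1, one checks that $\Lambda^{[i]}$ is the Gram matrix of the ``left environment vectors'' on $\mathcal H_{[1,i]}$ --- equivalently, the reduced density matrix $\tr_{\mathcal H_{[i+1,n]}}|\psi\rangle\langle\psi|$ expressed in the orthonormal bond basis induced by the right-canonical tensors on $\mathcal H_{[i+1,n]}$. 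Consequently the eigenvalues of $\Lambda^{[i]}$ are exactly the squared Schmidt coefficients across the cut $\mathcal H_{[1,i]}\otimes\mathcal H_{[i+1,n]}$, and its rank equals the Schmidt rank across that cut (matching the bond dimension produced in Step~1).

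\textbf{Step 3 (diagonalize on bonds).} Each $\Lambda^{[i]}$ is Hermitian positive semidefinite, so there is a unitary $V^{[i]}$ with $V^{[i]\dag}\Lambda^{[i]}V^{[i]}=\diag\bigl\{(\lambda_1^{[i]})^2,(\lambda_2^{[i]})^2,\ldots\bigr\}$ in non-ascending order (taking $V^{[0]}=V^{[n]}=1$). Perform the gauge change $A^{[i]}_j\mapsto V^{[i-1]\dag}A^{[i]}_j V^{[i]}$. Since each $V^{[i]}$ is unitary on a bond index, the first canonical condition $\sum_j A^{[i]}_j A^{[i]\dag}_j=I$ is preserved, while the recursion for $\Lambda^{[i]}$ transforms covariantly so that both $\Lambda^{[i-1]}$ and $\Lambda^{[i]}$ are in the required diagonal form afterwards. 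The bond dimensions are unchanged by a unitary gauge change.

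\textbf{Main obstacle.} The substantive step is Step~2: one must show that the \emph{combinatorially} defined recursion $\Lambda^{[i]}=\sum_j A^{[i]\dag}_j\Lambda^{[i-1]}A^{[i]}_j$ actually computes the reduced density matrix in the right-canonical bond basis, so that its spectrum is forced to be the squared Schmidt coefficients rather than some unrelated Hermitian object. Once this identification is pinned down, the remaining work --- carrying the bond unitaries through the isometry condition in Step~3 and tracking ranks in Step~1 --- is routine.
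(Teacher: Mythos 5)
The paper does not actually prove this lemma: it is quoted as a known result from \cite{PVWC07}, so there is no in-paper argument to compare against. Your SVD-sweep construction is the standard proof of exactly this statement, and Steps 1 and 3 are correct as written: the right-to-left sweep makes each site tensor a right isometry without increasing any bond dimension (the new dimension at the cut left of site $i$ is $\mathrm{rank}(M^{[i]})\le D_{i-1}$); the recursion $\Lambda^{[i]}=\sum_j A^{[i]\dag}_j\Lambda^{[i-1]}A^{[i]}_j$ is precisely the Gram matrix of the left block vectors $|L^{[i]}_\beta\rangle=\sum_{j_1,\ldots,j_i}(A^{[1]}_{j_1}\cdots A^{[i]}_{j_i})_\beta|j_1\cdots j_i\rangle$, and since the right block vectors are orthonormal after Step 1, the nonzero spectrum of this Gram matrix is the set of squared Schmidt coefficients; and the bond unitaries commute through both canonical conditions exactly as you say.

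The one point you should tighten is the parenthetical claim in Step 2 that the rank of $\Lambda^{[i]}$ matches the bond dimension produced in Step 1. It need not: the right sweep only guarantees that the right block vectors at each cut are linearly independent, while the left block vectors can still be linearly dependent (for instance, when the input MPS is a redundant representation of the state), in which case $\Lambda^{[i]}$ is singular and Step 3 produces a diagonal matrix with some zero entries, contradicting the lemma's requirement that all $\lambda_j^{[i]}>0$. The fix is routine but should be stated: after Step 3, insert at each bond the projector onto the support of $\Lambda^{[i]}$ (equivalently, delete the columns of $A^{[i]}_j$ and the rows of $A^{[i+1]}_j$ indexed by the kernel). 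This leaves the state unchanged because the discarded directions correspond to left block vectors that are null vectors of the Gram matrix and hence the zero vector of $\mathcal H_{[1,i]}$; it preserves $\sum_j A_j^{[i]}A_j^{[i]\dag}=I$ when restricted to the support; and it only decreases bond dimensions, so the non-increase claim survives.
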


Fix a cut, and let $\lambda_1\ge\lambda_2\ge\cdots>0$ be the Schmidt coefficients of $|\psi\rangle$ across the cut, where $\sum_j\lambda_j^2=1$. The ``truncation error'' can be upper bounded as follows.

\begin{lemma} [\cite{VC06}] \label{area}
Let $R_\alpha(\alpha<1)$ be the R\'enyi entanglement entropy of $|\psi\rangle$ across the cut. Then,
\begin{equation}
\sum_{j>D'}\lambda_j^2\le e^{\frac{1-\alpha}{\alpha}(R_\alpha-\ln D')}.
\end{equation}
\end{lemma}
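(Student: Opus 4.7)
The plan is to turn the Renyi entropy bound into (i) a pointwise bound on $\lambda_{D+1}$ via pigeonhole, and (ii) a tail bound on $\sum_{j\ge D+1}\lambda_j^2$ obtained by factoring $\lambda_j^2=\lambda_j^{2\alpha}\cdot\lambda_j^{2(1-\alpha)}$ and using that the second factor is largest at $j=D+1$.

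First I would rewrite the hypothesis in the multiplicative form $\sum_j\lambda_j^{2\alpha}=e^{(1-\alpha)R_\alpha}$, which is the natural quantity to manipulate since $\alpha<1$ makes the exponent $2\alpha<2$, so $\lambda_j^{2\alpha}\ge\lambda_j^2$ and the sum is always at least $1$. Because the $\lambda_j$ are non-increasing, the first $D$ terms of this sum satisfy $D\lambda_{D+1}^{2\alpha}\le\sum_{j=1}^D\lambda_j^{2\alpha}\le e^{(1-\alpha)R_\alpha}$, which immediately yields
\begin{equation}
\lambda_{D+1}^{2\alpha}\le e^{(1-\alpha)R_\alpha}/D,\qquad \lambda_{D+1}^{2(1-\alpha)}\le\bigl(e^{(1-\alpha)R_\alpha}/D\bigr)^{(1-\alpha)/\alpha}.
\end{equation}

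Next I would write, for every $j\ge D+1$, $\lambda_j^2=\lambda_j^{2\alpha}\,\lambda_j^{2(1-\alpha)}\le\lambda_j^{2\alpha}\,\lambda_{D+1}^{2(1-\alpha)}$, using monotonicity together with $1-\alpha>0$. Summing over $j\ge D+1$ and bounding the residual $\sum_{j\ge D+1}\lambda_j^{2\alpha}$ by the full sum $e^{(1-\alpha)R_\alpha}$ gives
\begin{equation}
\sum_{j\ge D+1}\lambda_j^2\le\lambda_{D+1}^{2(1-\alpha)}\cdot e^{(1-\alpha)R_\alpha}\le\bigl(e^{(1-\alpha)R_\alpha}/D\bigr)^{(1-\alpha)/\alpha}\cdot e^{(1-\alpha)R_\alpha}.
\end{equation}
Collecting the exponents of $e$ yields $(1-\alpha)R_\alpha\bigl[(1-\alpha)/\alpha+1\bigr]=(1-\alpha)R_\alpha/\alpha$, and the factor $D^{-(1-\alpha)/\alpha}$ combines into the stated exponential $e^{\frac{1-\alpha}{\alpha}(R_\alpha-\ln D)}$.

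There is no real obstacle here; the only point that requires a moment's thought is choosing the right split $\lambda_j^2=\lambda_j^{2\alpha}\lambda_j^{2(1-\alpha)}$ so that the non-$L^{2\alpha}$-summable tail is absorbed by pigeonhole at the cutoff index $D+1$, while the $L^{2\alpha}$-summable part is controlled directly by the Renyi norm. The inequality $\alpha<1$ is used twice and essentially: once to make $\lambda_j^{2(1-\alpha)}$ a decreasing function of $j$ (so that bounding it by its value at $D+1$ is legitimate), and once to make $(1-\alpha)/\alpha$ positive (so that the pigeonhole bound on $\lambda_{D+1}$ really shrinks with $D$).
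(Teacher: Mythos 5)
Your argument is correct: the pigeonhole bound $D\lambda_{D+1}^{2\alpha}\le\sum_j\lambda_j^{2\alpha}=e^{(1-\alpha)R_\alpha}$ combined with the split $\lambda_j^2=\lambda_j^{2\alpha}\lambda_j^{2(1-\alpha)}$ and monotonicity yields exactly the stated exponent, and your uses of $\alpha<1$ are the right ones. The paper does not reproduce a proof (it cites \cite{VC06}), and your derivation is the standard argument for this truncation bound, so there is nothing further to compare.
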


\begin{lemma} [\cite{Hua14a}] \label{gap}
Suppose that $|\psi\rangle$ is the (at most constant-fold degenerate) ground state of a 1D Hamiltonian $H=\sum_{i=1}^{N-1}H_i$ with a constant energy gap $\Delta=\Theta(1)$, where $H_i$ acts on $\mathcal H_{[i,i+1]}$ and satisfies $\|H_i\|\le1$. Then, $\sum_{j>D'}\lambda_j^2\le\epsilon$ for
\begin{equation}
D'=e^{\tilde O\left(\frac{1}{\Delta}+\sqrt[4]{\frac{1}{\Delta}\log^3\frac{1}{\epsilon}}\right)}.
\end{equation}
\end{lemma}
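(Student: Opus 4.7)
The plan is to derive a Renyi entropy area law for 1D gapped ground states and then convert that bound into the desired Schmidt coefficient tail bound using Lemma \ref{area}.

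First, I would use the approximate ground state projector (AGSP) machinery of Arad--Kitaev--Landau--Vazirani. An AGSP $K$ across the fixed cut $\mathcal H_{[1,i]}\otimes \mathcal H_{[i+1,n]}$ is an operator that (i) fixes the ground space, (ii) shrinks the weight on its orthogonal complement by some factor $\Delta_K<1$, and (iii) has bounded Schmidt rank $D_K$ across the cut. The standard construction combines Chebyshev polynomials of $H$ (whose degree $\ell$ controls the shrinking factor $\Delta_K \approx e^{-\Omega(\ell\sqrt{\Delta})}$) with a Lieb--Robinson truncation argument that localizes the operator to a window around the cut, bounding $D_K$ by something like $e^{O(\ell^{2/3})}$. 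Tuning $\ell$ so that $D_K\cdot\Delta_K^2<1/2$ is the key trade-off.

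Second, I would iterate the AGSP on a trivial product state to boost the overlap with the ground space and use a truncated-bond bootstrapping argument (in the spirit of Arad--Landau--Vazirani--Vidick) to control how fast the Schmidt spectrum of the iterated state decays. This yields a quantitative area law for $R_\alpha$ with $\alpha<1$: schematically $R_\alpha \le \tilde O(1/\Delta)\cdot g(\alpha)$, where $g(\alpha)$ captures an explicit blow-up as $\alpha\to 1$. Handling the (at most constant-fold) degeneracy only requires replacing the single ground-state projector by the projector onto the low-energy subspace throughout the AGSP analysis, which changes constants but not scaling.

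Third, I would plug this Renyi bound into Lemma \ref{area}. Demanding $\sum_{j\ge D+1}\lambda_j^2\le\epsilon$ becomes $\ln D \ge R_\alpha + \frac{\alpha}{1-\alpha}\ln(1/\epsilon)$. Writing $\beta:=1-\alpha$ and substituting the Renyi area law gives two competing terms: one growing as $\beta\to 0$ from $g(\alpha)$, the other proportional to $\frac{1}{\beta}\ln(1/\epsilon)$. Optimizing $\beta$ to balance them yields $\ln D = \tilde O\bigl(1/\Delta + \sqrt[4]{(1/\Delta)\log^3(1/\epsilon)}\bigr)$, matching the stated bound (the fourth root reflects the specific power of $\beta$ in $g$ produced by the AGSP analysis).

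The main obstacle is the second step: the AKLV framework most naturally yields a von Neumann area law, and the refinements that control the \emph{tail} of the Schmidt spectrum (rather than just the entropy) are more delicate. Specifically, one needs to show that the bond-dimension truncation of the iterated AGSP output loses only geometrically little weight, so that the resulting Schmidt coefficients $\lambda_j$ decay fast enough to yield a finite $R_\alpha$ for $\alpha<1$, and then to track how the decay rate degrades as $\alpha\to 1$. Once this quantitative Renyi area law is in hand, the rest is a clean optimization.
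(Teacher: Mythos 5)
The paper does not actually prove this lemma; it is imported wholesale from [Hua14a], so there is no in-paper argument to compare against. Your AGSP sketch is indeed the engine behind that reference: Chebyshev-polynomial AGSPs with truncated/localized interactions, the $D_K\Delta_K^2<1/2$ trade-off, bootstrapping from a product state with non-negligible ground-space overlap, and handling the constant-fold degeneracy by running the argument with the projector onto the low-energy subspace. However, the logical order there is the reverse of yours: each application of the AGSP multiplies the Schmidt rank by $D_K$ while shrinking the component orthogonal to the ground space by $\Delta_K$, so the weight beyond rank $D_0D_K^k$ is controlled by $\Delta_K^{2k}$, and unwinding $k$ gives the tail bound $\ln D=\tilde O\bigl(1/\Delta+\sqrt[4]{\Delta^{-1}\log^3(1/\epsilon)}\bigr)$ \emph{directly}; the Renyi area law is then a corollary of this tail bound, not an ingredient. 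Your detour (tail control $\to R_\alpha\to$ Lemma \ref{area} $\to$ tail) buys nothing, and you concede as much at the end: the ``delicate refinement'' you flag --- controlling the tail of the Schmidt spectrum rather than just an entropy --- \emph{is} the content of the lemma, so your sketch defers the key step rather than supplying it.

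Your description of the final optimization is also backwards. Since $R_\alpha$ is non-increasing in $\alpha$, any blow-up in a bound $g(\alpha)\ge R_\alpha$ must occur as $\alpha\to0$ (i.e., $\beta\to1$), not as $\beta\to0$ as you write; meanwhile the cost $\frac{\alpha}{1-\alpha}\ln(1/\epsilon)$ from Lemma \ref{area} is increasing in $\alpha$. The two terms therefore compete at \emph{small} $\alpha$: with an explicit dependence of the form $g(\alpha)=\tilde O(\alpha^{-3}\Delta^{-1})$, the optimum sits at $\alpha\sim(\Delta\log(1/\epsilon))^{-1/4}$ and reproduces the $\sqrt[4]{\Delta^{-1}\log^3(1/\epsilon)}$ exponent. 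In particular, a Renyi area law at any single fixed $\alpha<1$ is strictly insufficient --- it yields only $\ln D\gtrsim\frac{\alpha}{1-\alpha}\log(1/\epsilon)$, i.e., $D=\poly(1/\epsilon)$, exponentially weaker than the stated bound --- so any route through $R_\alpha$ must track the $\alpha$-dependence quantitatively and optimize $\alpha\to0$ as $\epsilon\to0$.
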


\subsection{Main lemma}

\begin{lemma} \label{trunc}
For any integer $D'\ge2$, let $\epsilon_{D'}$ satisfy
\begin{equation}
\epsilon_{D'}\ge\max_{1\le i\le N-1}\sum_{j>D'}\left(\lambda_j^{[i]}\right)^2,
\end{equation}
where $\lambda_1^{[i]}\ge\lambda_2^{[i]}\ge\cdots>0$ are the Schmidt coefficients of $|\psi\rangle$ across the cut $\mathcal H_{[1,i]}\otimes\mathcal H_{[i+1,N]}$. Suppose that
\begin{equation} \label{eq:N}
N\ge\frac{C}{\sqrt[3]{\epsilon_{D'}\log D'}}\log\frac{D'}{\epsilon_{D'}}
\end{equation}
for a sufficiently large constant $C>0$. Then, there exists an MPS $|\phi\rangle$ with bond dimension
\begin{equation}
D=O\left(D'\sqrt[3]{\frac{\log^2D'}{\epsilon_{D'}}}\right)
\end{equation}
such that
\begin{equation} \label{eq:l4}
\left|\langle\psi|\hat O|\psi\rangle-\langle\phi|\hat O|\phi\rangle\right|\le O\left(\sqrt[3]{\epsilon_{D'}\log D'}\right)
\end{equation}
for any local observable $\hat O$ with $\|\hat O\|\le1$.
\end{lemma}

\begin{IEEEproof}
Schuch and Verstraete \cite{SV17} proved an analogous result, albeit restricted to MPO approximations. We build upon their approach, introducing additional technical ingredients to handle the MPS case. First, we express $|\psi\rangle$ exactly as an MPS (\ref{mps}) in canonical form with bond dimension $\le d^{\lfloor N/2\rfloor}$. Let $P$ be the diagonal matrix with entries $P_{jj}=1$ for $j\le D'$ and $P_{jj}=0$ otherwise. To simplify the notation, let $b:=\lceil\log_dD'\rceil$ and $m:=\lceil\sqrt[3]{b^2/\epsilon_{D'}}\rceil$, where $\lceil\cdot\rceil$ denotes the ceiling function. We may assume that $\epsilon_{D'}\log D'\le1$ so that $b=O(m)$, since otherwise (\ref{eq:l4}) trivially holds. For $i=1-m,2-m,\ldots,N-1$, let
\begin{multline}
|u'_i\rangle=\sum_{j_1,j_2,\ldots,j_N=1}^d\underbrace{A_{j_1}^{[1]}A_{j_2}^{[2]}\cdots A_{j_{i}}^{[i]}}_{\mathrm{without}~P}\\
\cdot\underbrace{PA_{j_{i+1}}^{[i+1]}PA_{j_{i+2}}^{[i+2]}P\cdots PA_{j_{i+m}}^{[i+m]}P}_{\mathrm{with}~P}\\
\cdot\underbrace{A_{j_{i+m+1}}^{[i+m+1]}A_{j_{i+m+2}}^{[i+m+2]}\cdots A_{j_N}^{[N]}}_{\mathrm{without}~P}|j_1j_2\cdots j_N\rangle
\end{multline}
be the unnormalized state obtained by truncating every bond from site $\max\{i,1\}$ to site $\min\{i+m+1,N\}$. A minor modification of the proof of Lemma 1 in Ref.~\cite{VC06} yields
\begin{equation}
\||\psi\rangle-|u'_i\rangle\|^2\le2\sum_{k=\max\{i,1\}}^{\min\{i+m,N-1\}}\sum_{j>D'}\left(\lambda_j^{[k]}\right)^2\le2(m+1)\epsilon_{D'}.
\end{equation}
Let $|u_i\rangle = |u'_i\rangle / \||u'_i\rangle\|$ be the normalized state. Consequently,
\begin{equation} \label{eq:u_app}
\left|\langle\psi|\hat O|\psi\rangle-\langle u_i|\hat O|u_i\rangle\right|\le O(\sqrt{m\epsilon_{D'}})
\end{equation}
for any observable $\hat O$ with $\|\hat O\|\le1$.

Since the Schmidt ranks of $|u_i\rangle$ across the cuts $\mathcal H_{[1,i]}\otimes\mathcal H_{[i+1,N]}$ and $\mathcal H_{[1,i+m]}\otimes\mathcal H_{[i+m+1,N]}$ are $\le D'$ by construction, there exist isometries
\begin{gather}
U_i^\mathrm{left}:\mathcal H_{[i-b+1,i]}\to\mathcal H_{[1,i]},\\
U_i^\mathrm{right}:\mathcal H_{[i+m+1,i+m+b]}\to\mathcal H_{[i+m+1,N]}
\end{gather}
such that
\begin{equation}
|u_i\rangle=U_i^\mathrm{left}\otimes I_{[i+1,i+m]}\otimes U_i^\mathrm{right}|v_i\rangle
\end{equation}
for some state $|v_i\rangle\in\mathcal H_{[i-b+1,i+m+b]}$, where $I_X$ is the identity operator on $\mathcal H_X$. Let $M:=m+2b$. For $i=1-m,2-m,\ldots,2b$, define
\begin{equation}
|w_i\rangle=\bigotimes_{k=0}^{\left\lfloor\frac{N-1-i}{M}\right\rfloor}(V_{i+kM}\otimes I_{[i+1+kM,i+m+b+kM]})|v_{i+kM}\rangle,
\end{equation}
where $V_{i+kM}$ is some unitary on $\mathcal H_{[i-b+1+kM,i+kM]}$. We view $|w_i\rangle$ as a state of the entire system $\mathcal H_{[1,N]}$ by implicitly tensoring it with the product state $|0\cdots0\rangle$ on the remaining qudits. For any observable $\hat O$ on $\mathcal H_{[i+1+kM,i+m+kM]}$,
\begin{equation} \label{error1}
\langle w_i|\hat O|w_i\rangle=\langle v_{i+kM}|\hat O|v_{i+kM}\rangle=\langle u_{i+kM}|\hat O|u_{i+kM}\rangle.
\end{equation}
Combining this with (\ref{eq:u_app}), we obtain
\begin{equation}
\left|\langle\psi|\hat O|\psi\rangle-\langle w_i|\hat O|w_i\rangle\right|\le O(\sqrt{m\epsilon_{D'}}).
\end{equation}

Suppose that $i\neq j$. If every $V_{i+kM}$ were a Haar-random unitary, then
\begin{multline}
\sup_{\|\hat O\|\le1}\E_{\{V_{i+kM}\}}\left|\langle w_i|\hat O|w_j\rangle\right|^2\le\sup_{\||W\rangle\|\le1}\E_{\{V_{i+kM}\}}|\langle w_i|W\rangle|^2\\
=\sup_{\||W\rangle\|\le1}\langle W|\left(\E_{\{V_{i+kM}\}}|w_i\rangle\langle w_i|\right)|W\rangle\le e^{-\frac{\Omega(Nb)}{M}}=:t.
\end{multline}
Markov's inequality implies that
\begin{equation}
\Pr\left(\left|\langle w_i|\hat O|w_j\rangle\right|^2\ge\sqrt t\right)\le\sqrt t
\end{equation}
for any fixed observable $\hat O$ with $\|\hat O\|\le1$. Since $j$ can take $<M$ values and the total number of linearly independent local operators is $O(N)$, the union bound over the $O(MN)$ events, together with (\ref{eq:N}), guarantees the existence of a specific set of unitaries $\{V_{i+kM}\}$ such that
\begin{equation}
\left|\langle w_i|\hat O|w_j\rangle\right|\le O\left(\sqrt[4]t\right)
\end{equation}
for all local operators $\hat O$ with $\|\hat O\|\le1$.

Define
\begin{equation}
|\phi\rangle=\frac{1}{\sqrt M}\sum_{i=1-m}^{2b}|w_i\rangle
\end{equation}
so that $\langle\phi|\hat O|\phi\rangle=T_\mathrm{diag}+T_\mathrm{off}$, where
\begin{gather}
T_\mathrm{diag}:=\frac{1}{M}\sum_{i=1-m}^{2b}\langle w_i|\hat O|w_i\rangle,\\
T_\mathrm{off}:=\frac{1}{M}\sum_{i=1-m}^{2b}\sum_{j\in\{1-m,2-m,\ldots,2b\}\setminus\{i\}}\langle w_i|\hat O|w_j\rangle.
\end{gather}
$T_\mathrm{diag}$ is the average of $M$ terms, $M-O(b)$ of which are close to $\langle\psi|\hat O|\psi\rangle$ in the sense of (\ref{error1}). $|T_\mathrm{off}|=O(M\sqrt[4]t)$ is negligible due to (\ref{eq:N}). Thus,
\begin{multline} \label{error3}
\left|\langle\psi|\hat O|\psi\rangle-\langle\phi|\hat O|\phi\rangle\right|\le O(\sqrt{m\epsilon_{D'}})+O(b/M)\\
\le O\left(\sqrt[3]{b\epsilon_{D'}}\right)=O\left(\sqrt[3]{\epsilon_{D'}\log D'}\right)
\end{multline}
for any local observable $\hat O$ with $\|\hat O\|\le1$. By construction, each $|w_i\rangle$ is an MPS with bond dimension $\le D'$. The bond dimension of $|\phi\rangle$ is
\begin{equation}
D\le D'M=D'(m+2b)=O\left(D'\sqrt[3]{\frac{\log^2D'}{\epsilon_{D'}}}\right).
\end{equation}
\end{IEEEproof}

\subsection{Proofs of main theorems}

The proofs of Theorems \ref{t1} and \ref{t2} assume (\ref{eq:N}) so that Lemma \ref{trunc} applies. Otherwise, the theorems follow by combining Lemma 1 in Ref.~\cite{VC06} with Lemmas \ref{area} and \ref{gap}, respectively.

\begin{IEEEproof} [Proof of Theorem \ref{t1}]
Lemma \ref{area} yields
\begin{equation}
\epsilon_{D'}=O\left(D'^{-\frac{1-\alpha}{\alpha}}\right).
\end{equation}
Lemma \ref{trunc} gives
\begin{multline}
\delta=O\left(\sqrt[3]{\epsilon_{D'}\log D'}\right)=O\left(D'^{-\frac{1-\alpha}{3\alpha}}\sqrt[3]{\log D'}\right)\\
\implies D'=\tilde O\left(\delta^{-\frac{3\alpha}{1-\alpha}}\right).
\end{multline}
Finally,
\begin{multline}
D=O\left(D'\sqrt[3]{\frac{\log^2D'}{\epsilon_{D'}}}\right)\implies D\delta=O(D'\log D')\\
\implies D=\tilde O\left(\delta^{-1-\frac{3\alpha}{1-\alpha}}\right).
\end{multline}
\end{IEEEproof}

\begin{IEEEproof} [Proof of Theorem \ref{t2}]
Lemma \ref{gap} yields
\begin{equation}
\log D'=\tilde O\left(\frac{1}{\Delta}+\sqrt[4]{\frac{1}{\Delta}\log^3\frac{1}{\epsilon_{D'}}}\right).
\end{equation}
Lemma \ref{trunc} gives
\begin{multline}
\delta=O\left(\sqrt[3]{\epsilon_{D'}\log D'}\right)=\tilde O(\sqrt[3]{\epsilon_{D'}})\\
\implies D'=e^{\tilde O\left(\frac{1}{\Delta}+\sqrt[4]{\frac{1}{\Delta}\log^3\frac{1}{\epsilon_{D'}}}\right)}=e^{\tilde O\left(\sqrt[4]{\frac{1}{\Delta}\log^3\frac{1}{\delta}}\right)}.
\end{multline}
Finally,
\begin{multline}
D=O\left(D'\sqrt[3]{\frac{\log^2D'}{\epsilon_{D'}}}\right)\implies D\delta=O(D'\log D')\\
\implies D=e^{\tilde O\left(\sqrt[4]{\frac{1}{\Delta}\log^3\frac{1}{\delta}}\right)}/\delta.
\end{multline}
\end{IEEEproof}

\section{Results in two dimensions}

We now extend Theorem \ref{t1} to 2D. Consider a square lattice of $N\times N$ qudits.

\begin{theorem} \label{2d}
Suppose that $|\psi\rangle$ satisfies an area law for the R\'enyi entanglement entropy $R_\alpha$ with $\alpha<1$, in the sense that the entropy between any rectangular region $X$ and its complement $\overline X$ is $O(|\partial X|)$, where $\partial X$ is the boundary of $X$. Then for any $\delta>0$, there exists a PEPS $|\phi\rangle$ with bond dimension
\begin{equation}
D=e^{O(1/\delta)}
\end{equation}
such that $|\langle\psi|\hat O|\psi\rangle-\langle\phi|\hat O|\phi\rangle|\le\delta$ for any local observable $\hat O$ with $\|\hat O\|\le1$.
\end{theorem}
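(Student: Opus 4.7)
The plan is to transfer the block-translation construction of Lemma~\ref{trunc} from a 1D chain to the 2D square lattice: intervals of length $m+2b$ are replaced by $(m+2b)\times(m+2b)$ tiles, and the 1D shift by a 2D shift $(i,j)$ ranging over $M^2$ values, where $M:=m+2b$. Set $L:=m=\Theta(1/\delta)$. By the 2D area-law hypothesis and Lemma~\ref{area}, truncating the Schmidt decomposition across the boundary of any $L\times L$ tile to the top $D'$ terms incurs error $\epsilon_{D'}\le e^{(1-\alpha)(cL-\ln D')/\alpha}$, so choosing $\ln D'=\Theta(L)$ drives $\epsilon_{D'}$ to $e^{-\Omega(L)}$. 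A welcome simplification over 1D is that a constant frame thickness $b=O(1)$ already suffices: a thickness-$b$ frame of an $L\times L$ tile contains $\Theta(bL)$ qudits and hence has dimension $d^{\Theta(bL)}=e^{\Omega(L)}$, so any $b$ with $b\log d$ exceeding the area-law constant makes the frame large enough to absorb the boundary Schmidt rank $D'$ via a single isometry.

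For each shift $(i,j)$, let $|u_{i,j}^{(k,l)}\rangle$ be the state obtained by truncating $|\psi\rangle$ across the boundary of tile $(k,l)$. Since the post-truncation Schmidt rank is at most $D'$, there exists an isometry $U^{(k,l)}$ from the tile's frame into the tile's complement with $|u_{i,j}^{(k,l)}\rangle=U^{(k,l)}|v_{i,j}^{(k,l)}\rangle$, where $|v_{i,j}^{(k,l)}\rangle$ is supported on the tile. Set $|w_{i,j}\rangle:=\bigotimes_{(k,l)}V^{(k,l)}|v_{i,j}^{(k,l)}\rangle$ with Haar-random frame unitaries $V^{(k,l)}$, and $|\phi\rangle:=M^{-1}\sum_{(i,j)}|w_{i,j}\rangle$. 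Diagonal terms $\langle w_{i,j}|\hat O|w_{i,j}\rangle$ agree with $\langle\psi|\hat O|\psi\rangle$ up to $O(\sqrt{\epsilon_{D'}})$ whenever $\hat O$ sits in the interior of some tile of the $(i,j)$-shifted grid, which holds for all but an $O(b/L)$ fraction of shifts. For cross terms $(i,j)\ne(i',j')$, the same moment argument as in Lemma~\ref{trunc} gives $\E|\langle w_{i,j}|\hat O|w_{i',j'}\rangle|^2\le e^{-\Omega(n^2/L)}$, since each of the $\Theta(n^2/L^2)$ frames contributes a factor $1/D'=e^{-\Omega(L)}$ to the max eigenvalue of the averaged density matrix; Markov and a union bound over $O(n^2)$ local operators and $O(M^4)$ pairs of shifts fix a deterministic $\{V^{(k,l)}\}$ for which all cross terms are $o(\delta)$. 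Setting $L=\Theta(1/\delta)$ balances the truncation error against $b/L\le\delta$.

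The final bond dimension is bounded by $M^2$ times the bond dimension needed to represent each $|v_{i,j}^{(k,l)}\rangle$ as a PEPS on the $M\times M$ tile. My plan is to order the tile by a snake path: any cut of the snake separates the tile into two parts whose interface in the 2D lattice has length $O(L)$, so the 2D area law bounds $R_\alpha$ across each snake cut by $O(L)$. Truncating $|v_{i,j}^{(k,l)}\rangle$ at each snake cut to Schmidt rank $D'=e^{O(L)}$ produces an MPS of bond dimension $e^{O(L)}$; viewed as a PEPS on the tile (with trivial virtual dimension across non-snake edges), the total PEPS bond dimension of $|\phi\rangle$ is $M^2\cdot e^{O(L)}=e^{O(1/\delta)}$. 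The additional approximation error from the within-tile truncations is $O(L\sqrt{\epsilon_{D'}})$ (from $O(L^2)$ snake cuts, each contributing $\epsilon_{D'}$ to $\|\cdot\|^2$), which is absorbed into the overall $O(\delta)$ budget for $L=\Omega(\log(1/\delta))$.

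The main obstacle is precisely the within-tile bond-dimension argument: in 1D, the truncated MPS $|u_i\rangle$ automatically has bond dimension $\le D'$ once one projects at every internal bond, but PEPS admit no local canonical form and no clean analog of that move. The snake-MPS workaround above introduces per-tile truncation error that must be carefully accounted for, and the resulting representation is a rather anisotropic PEPS rather than a "natural" square-lattice one. A subsidiary technicality is the frame-to-complement isometry, which in 2D is a single (rather than two-sided) map because the complement of a tile is connected; this is well-defined so long as the frame dimension matches the post-truncation boundary Schmidt rank $D'$, which is guaranteed by the constant frame thickness $b$.
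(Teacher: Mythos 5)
Your overall architecture --- tile plus thin frame, truncate, compress into the tile by an isometry, scramble the frames with Haar-random unitaries, average over all shifts, then Markov plus a union bound over local observables and pairs of shifts --- is the same as the paper's, and your diagonal/cross-term error analysis and parameter choices ($m=\Theta(1/\delta)$, $b=O(1)$, $\ln D'=\Theta(m)$) match. The genuine gap sits exactly where you flag ``the main obstacle'': the within-tile bond dimension. Your plan is to first truncate $|\psi\rangle$ across the single cut between a tile $T$ and its complement, compress into $T$, and only \emph{then} truncate the resulting state $|v_{i,j}^{(k,l)}\rangle$ along the $O(L^2)$ snake cuts inside the tile, invoking ``the 2D area law'' to bound $R_\alpha$ of $|v_{i,j}^{(k,l)}\rangle$ across each snake cut by $O(L)$. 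But the area-law hypothesis concerns $|\psi\rangle$, not the truncated-and-compressed state: the truncation is a projection on $\mathcal H_{\overline T}$ followed by renormalization, and Renyi entropies with $\alpha<1$ are neither monotone under such local filtering nor continuous in trace distance, so the bound you need does not follow as written. (It can be rescued: for a snake-prefix region $P$ inside the tile, the projection yields the operator inequality $\rho^v_P\le\rho^\psi_P/(1-\epsilon_{D'})$, and operator monotonicity of $x^\alpha$ for $\alpha\in(0,1)$ gives $R_\alpha(\rho^v_P)\le R_\alpha(\rho^\psi_P)+\frac{\alpha}{1-\alpha}\ln\frac{1}{1-\epsilon_{D'}}$; but no such argument appears in your proposal, and without it the second round of truncations --- and hence the claimed $e^{O(L)}$ bond dimension of each $|v_{i,j}^{(k,l)}\rangle$ --- is unjustified.)

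The paper sidesteps this entirely by reordering the truncations. It writes $|\psi\rangle$ exactly as an MPS whose first $m^2$ qudits snake through the inner square $L$, and truncates all $m^2$ bonds along that snake \emph{on the original state}; the last of these bonds is precisely the cut between $L$ and $\overline L$. Each such cut separates a region of boundary length $O(m)$ from its complement, so Lemma \ref{area} controls every truncation error directly in terms of the entanglement of $|\psi\rangle$, the cumulative error over $m^2$ bonds being absorbed into $D'=e^{O(m)}\poly(1/\delta)$. The payoff is that the compressed state $|v_{i,j}\rangle$ is then \emph{automatically} an MPS of bond dimension $O(D')$, with no second round of truncation and no need for an area law for a perturbed state. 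If you perform your within-tile snake truncations on $|\psi\rangle$ itself, together with the tile-boundary truncation, your argument closes and essentially coincides with the paper's.
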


\begin{figure}
\resizebox{\columnwidth}{!}{
\begin{tikzpicture}
\draw[help lines](-.5,-.5)grid(8.5,6.5);
\draw[ultra thick](1,1)--(1,5)--(2,5)--(2,1)--(3,1)--(3,5)--(4,5)--(4,1)--(5,1)--(5,5)--(6,5);
\draw[dashed,thick](2/3,2/3)rectangle(5+1/3,5+1/3);
\draw[dashed,thick](1/3,1/3)rectangle(7.5,5+2/3);
\begin{scope}[font=\small, inner sep=3]
\node[above right]at(1,1){1};
\node[right]at(1,2){2};
\node[right]at(1,3){3};
\node[right]at(1,4){4};
\node[below right]at(1,5){5};
\node[below right]at(2,5){6};
\node[right]at(2,4){7};
\node[left]at(5,4){24};
\node[below left]at(5,5){25};
\node[below left]at(6,5){26};
\end{scope}
\node[fill=white]at(5.5,2.5){$Y_{i,j}$};
\node[fill=white]at(7.5,2.5){$Y'_{i,j}$};
\draw[decorate,decoration={brace,amplitude=5}](0,1/2)--(0,5.5);
\node at(-.4,3){$m$};
\draw[decorate,decoration={brace,amplitude=5}](2/3,6)--(5+1/3,6);
\node at(3,6.4){$m$};
\draw[decorate,decoration={brace,amplitude=5}](5+2/3,6)--(7+1/3,6);
\node at(6.5,6.4){$b$};
\end{tikzpicture}
}
\caption{Construction of the state $|u_{i,j}\rangle$ in the proof of Theorem \ref{2d}, illustrated with parameters $m=5$ and $b=2$. The background grid represents the lattice. The inner dashed square and outer dashed rectangle enclose regions $Y_{i,j}$ and $Y'_{i,j}$, respectively. The numbers along the thick solid line show the order of the qudits. The path starts at site $(i,j)$ (labeled ``$1$'') and completely traverses $Y_{i,j}$ before entering $\overline{Y_{i,j}}$.}
\label{lattice}
\end{figure}

\begin{IEEEproof}
We index the qudits by $\{0,1,\ldots,N-1\}^2$, where the coordinates are taken modulo $N$ (periodic boundary conditions). See Fig.~\ref{lattice} for an illustration of the regions and the site ordering defined below. For any site $(i,j)$, let
\begin{gather}
Y_{i,j}=\{i,i+1,\ldots,i+m-1\}\times\{j,j+1,\ldots,j+m-1\},\\
\begin{split}
Y'_{i,j}=\{i,i+1,\ldots,i+m+b-1\}\\
\times\{j,j+1,\ldots,j+m-1\}
\end{split}
\end{gather}
be square and rectangular regions of sizes $m\times m$ and $(m+b)\times m$, respectively, satisfying $Y_{i,j}\subset Y'_{i,j}$. We express $|\psi\rangle$ exactly as an MPS in canonical form with bond dimension $\le d^{\lfloor N^2/2\rfloor}$, where the qudits are ordered as follows. We start at $(i,j)$ and then follow the thick solid line, which completely traverses $Y_{i,j}$ before reaching the first qudit outside $Y_{i,j}$. The remaining $N^2-m^2-1$ qudits in $\overline{Y_{i,j}}$ are ordered arbitrarily. Let $|u'_{i,j}\rangle$ be the unnormalized state obtained by truncating the first $m^2$ bonds (marked by the thick solid line) to dimension $D'$, and let $|u_{i,j}\rangle = |u'_{i,j}\rangle / \||u'_{i,j}\rangle\|$ be the normalized state. By Lemma \ref{area} and the area law for the R\'enyi entanglement entropy $R_\alpha(\alpha<1)$, we have
\begin{equation}
\left|\langle\psi|\hat O|\psi\rangle-\langle u_{i,j}|\hat O|u_{i,j}\rangle\right|\le\delta/2
\end{equation}
for any observable $\hat O$ with $\|\hat O\|\le1$, by choosing
\begin{equation}
D'=e^{O(m)}\poly(1/\delta).
\end{equation}
We may assume that
\begin{equation} \label{eq:c}
N\ge C/\delta
\end{equation}
for a sufficiently large constant $C>0$, since otherwise letting $|\phi\rangle=|u_{i,j}\rangle$ with $m=N$ and $b=0$ completes the proof.

Let $\mathcal H_X=\mathbb C^{d^{|X|}}$ denote the Hilbert space of qudits in region $X$, and let $I_X$ be the identity operator on $\mathcal H_X$. Set
\begin{equation}
b=\left\lceil\frac{\log_dD'}{m}\right\rceil=O\left(1+\frac{1}{m}\log\frac{1}{\delta}\right)
\end{equation}
so that $\dim\mathcal H_{Y'_{i,j}\setminus Y_{i,j}}\ge D'$. Since the Schmidt rank of $|u_{i,j}\rangle$ across the cut separating $Y_{i,j}$ from $\overline{Y_{i,j}}$ is $\le D'$ by construction, there exists an isometry $U_{i,j}:\mathcal H_{Y'_{i,j}\setminus Y_{i,j}}\to\mathcal H_{\overline{Y_{i,j}}}$ such that
\begin{equation}
|u_{i,j}\rangle=I_{Y_{i,j}}\otimes U_{i,j}|v_{i,j}\rangle
\end{equation}
for some state $|v_{i,j}\rangle\in\mathcal H_{Y'_{i,j}}$. Consequently, $\langle u_{i,j}|\hat O|u_{i,j}\rangle=\langle v_{i,j}|\hat O|v_{i,j}\rangle$ for any observable $\hat O$ on $\mathcal H_{Y_{i,j}}$. Let $V_{i,j}$ be some unitary on $\mathcal H_{Y'_{i,j}\setminus{Y_{i,j}}}$. Assume without loss of generality that $N$ is a common multiple of $m$ and $m+b$. We define
\begin{gather}
|\phi\rangle=\frac{1}{\sqrt{(m+b)m}}\sum_{i=1}^{m+b}\sum_{j=1}^{m}|w_{i,j}\rangle,\\
|w_{i,j}\rangle=\bigotimes_{k=0}^{\frac N{m+b}-1}\bigotimes_{l=0}^{\frac Nm-1}(I_{Y_{x,y}}\otimes V_{x,y})|v_{x,y}\rangle,
\end{gather}
where $x=i+k(m+b)$ and $y=j+lm$.

Analogous to (\ref{error3}), there exists a specific set of unitaries $\{V_{i,j}\}$ such that
\begin{equation}
\left|\langle\psi|\hat O|\psi\rangle-\langle\phi|\hat O|\phi\rangle\right|\le\delta/2+O(b/m)\le\delta
\end{equation}
for any local observable $\hat O$ with $\|\hat O\|\le1$, where we set $m=O(b/\delta)=O(1/\delta)$. By construction, each $|w_{i,j}\rangle$ is an MPS and thus a PEPS with bond dimension $e^{O(m)}\poly(1/\delta)$. Therefore, $|\phi\rangle$ is a PEPS with bond dimension
\begin{equation}
D=e^{O(m)}\poly(1/\delta)(m+b)m=e^{O(1/\delta)}.
\end{equation}
\end{IEEEproof}

\section{Logarithmic corrections to the area law}

Analogous results are obtained in both 1D and 2D for states with logarithmic corrections to the area law, such as many critical ground states \cite{HLW94, VLRK03, LRV04, CC04, CC09, Wol06, GK06}.

\begin{corollary} \label{c1}
In 2D, suppose that $|\psi\rangle$ satisfies an area law with logarithmic corrections for the R\'enyi entanglement entropy $R_\alpha$ with $\alpha<1$, in the sense that the entropy between any rectangular region $X$ and its complement $\overline X$ is $\tilde O(|\partial X|)$. Then for any $\delta>0$, there exists a PEPS $|\phi\rangle$ with bond dimension
\begin{equation}
D=e^{\tilde O(1/\delta)}
\end{equation}
such that $|\langle\psi|\hat O|\psi\rangle-\langle\phi|\hat O|\phi\rangle|\le\delta$ for any local observable $\hat O$ with $\|\hat O\|\le1$.
\end{corollary}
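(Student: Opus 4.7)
The plan is to retrace the proof of Theorem \ref{2d} essentially verbatim, with the single change that the area-law bound $R_\alpha = O(|\partial X|)$ is replaced by $R_\alpha = \tilde O(|\partial X|)$, and then verify that the extra polylog factors get absorbed into $\tilde O(1/\delta)$.

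First I would construct $|u_{i,j}\rangle$ exactly as before: order the qudits along the serpentine path of length $m^2$ through the $m\times m$ square $L$, enlarge to the $(m+b)\times m$ region $R$, write $|\psi\rangle$ as an exact MPS in this ordering, and truncate the first $m^2$ bonds to bond dimension $D'$. Each such cut separates a subset of $L$ from its complement, and that subset sits inside $L$ so its boundary has length $O(m)$; the hypothesis therefore gives $R_\alpha = \tilde O(m)$ across every such cut. Applying Lemma \ref{area} and summing the truncation errors over the $m^2$ cuts in the same way as in Lemma \ref{trunc}, it suffices to take $\ln D' = \tilde O(m) + O(\log(m/\delta))$, i.e. $D' = e^{\tilde O(m)}\poly(1/\delta)$, to guarantee $\sup_{\|\hat O\|\le 1}|\langle\psi|\hat O|\psi\rangle - \langle u_{i,j}|\hat O|u_{i,j}\rangle| \le \delta/2$.

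Next, still following the proof of Theorem \ref{2d}, an isometry $U_{i,j}:\mathcal H_{R\setminus L}\to\mathcal H_{\overline L}$ exists provided $b = \lceil(\log_d D')/m\rceil = \tilde O(1) + O(\tfrac{1}{m}\log\tfrac{1}{\delta})$. I would then define $|v_{i,j}\rangle$, $|w_{i,j}\rangle$, and $|\phi\rangle$ verbatim, apply the Haar-averaging plus union-bound argument unchanged to fix a concrete choice of $\{V_{i,j}\}$, and obtain $|\langle\psi|\hat O|\psi\rangle - \langle\phi|\hat O|\phi\rangle| \le \delta/2 + O(b/m)$. Setting $m = \tilde O(b/\delta) = \tilde O(1/\delta)$ makes the right-hand side $\le \delta$, and the PEPS bond dimension becomes $D = e^{\tilde O(m)}\poly(1/\delta)\cdot(m+b)m = e^{\tilde O(1/\delta)}$.

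The only step requiring nontrivial attention — and the main obstacle, such as it is — is checking that the coupled system of choices is self-consistent: $D'$ is determined by $m$ through $\tilde O(m)$, $b$ is determined by $D'$ and $m$, and $m$ is then set from $b$ and $\delta$. Since the logarithmic correction in the area law contributes only factors of $\poly\log m = \poly\log(1/\delta)$ wherever it enters, these get absorbed into the $\tilde O$ notation without any circular amplification, so $b$ really does remain polylogarithmic in $1/\delta$ and the choice $m = \tilde O(1/\delta)$ closes the loop cleanly.
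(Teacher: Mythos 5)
Your proposal is correct and coincides with the paper's own argument: the paper proves this corollary precisely as ``a minor modification of the proof of Theorem \ref{2d},'' namely rerunning that proof with $R_\alpha=\tilde O(m)$ across the serpentine cuts, so that $D'=e^{\tilde O(m)}\poly(1/\delta)$, $b$ stays polylogarithmic in $1/\delta$, and $m=\tilde O(1/\delta)$ yields $D=e^{\tilde O(1/\delta)}$. Your closing consistency check of the coupled choices of $D'$, $b$, and $m$ is exactly the point the paper leaves implicit.
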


\begin{IEEEproof}
This follows from the proof of Theorem \ref{2d} by modifying the parameters to
\begin{equation}
m=\tilde O(1/\delta),\quad D'=e^{\tilde O(m)}\poly(1/\delta)
\end{equation}
and adjusting (\ref{eq:c}) accordingly.
\end{IEEEproof}

\begin{corollary} \label{c2}
In 1D, suppose that $|\psi\rangle$ satisfies an area law with logarithmic corrections for the R\'enyi entanglement entropy $R_\alpha$ with $\alpha<1$, in the sense that the entropy between any contiguous region $X$ and its complement $\overline X$ is $\le c_\alpha\ln|X|+O(1)$. Then for any $\delta>0$, there exists an MPS $|\phi\rangle$ with bond dimension
\begin{equation}
D=\tilde O\left(\delta^{-1-\frac{3\alpha}{1-\alpha}-c_\alpha}\right)
\end{equation}
such that $|\langle\psi|\hat O|\psi\rangle-\langle\phi|\hat O|\phi\rangle|\le\delta$ for any local observable $\hat O$ with $\|\hat O\|\le1$.
\end{corollary}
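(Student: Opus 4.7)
The plan is to follow the proof of Theorem~\ref{t1} verbatim, substituting the logarithmic hypothesis at the one point where Lemma~\ref{area} is invoked. For a cut at position $k$, the pure-state symmetry $R_\alpha(\rho_{[1,k]})=R_\alpha(\rho_{[k+1,n]})$ lets one take $|X|=\min\{k,n-k\}$, and Lemma~\ref{area} now yields
\[
\sum_{j\ge D'+1}\bigl(\lambda_j^{[k]}\bigr)^2 = O\bigl(|X|^{(1-\alpha)c_\alpha/\alpha}\,D'^{-(1-\alpha)/\alpha}\bigr).
\]
Everything downstream of this---the construction of $|u_i\rangle$, the isometric compression into buffer regions of size $b=\lceil\log_d D'\rceil$, the Haar-averaging of the unitaries $V_{i+kM}$ combined with Markov's inequality for the cross terms, and the superposition $|\phi\rangle=M^{-1/2}\sum_i|w_i\rangle$---is imported unchanged from the proof of Lemma~\ref{trunc}.

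Inverting the same chain of estimates as in Theorem~\ref{t1}, the relation $\delta=O(\sqrt[3]{\epsilon_{D'}\log D'})$ solves to $D'=\tilde O(|X|^{c_\alpha}\,\delta^{-3\alpha/(1-\alpha)})$, and then the closing identity $D\delta=O(D'\log D')$ gives
\[
D=\tilde O\bigl(|X|^{c_\alpha}\,\delta^{-1-3\alpha/(1-\alpha)}\bigr).
\]
The exponent $-1-\tfrac{3\alpha}{1-\alpha}-c_\alpha$ promised by the statement then appears once $|X|$ is identified with the construction's intrinsic length scale $m=\tilde O(1/\delta)$ from Lemma~\ref{trunc}.

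The main obstacle is precisely this last identification. Taken literally, $\min\{k,n-k\}$ can be as large as $n/2$ for cuts near the bulk of a long chain, which would leave an unwanted $n^{c_\alpha}$ factor in the bond dimension. The way through is to refine the proof of Lemma~\ref{trunc} so that $\epsilon_{D'}$ is only measured over the $O(m)$ cuts \emph{inside} the sliding window used to build each $|u_i\rangle$, and then to argue---applying the logarithmic area law to the window itself, whose Renyi entropy is at most $c_\alpha\ln m+O(1)$---that the truncations relevant to local observables supported inside the window are controlled by the local entropy scale $m$ rather than the global scale $n$. Pinning the effective $|X|$ down to $m=\tilde O(1/\delta)$ in this way is the technical step I expect to require the most care; once it is in place, the rest of the argument is a mechanical rerun of Theorem~\ref{t1} with the extra factor $|X|^{c_\alpha}=\tilde O(\delta^{-c_\alpha})$ appended at the end.
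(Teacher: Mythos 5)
You have put your finger on exactly the right obstacle, and your arithmetic (the relation $\delta=O(\sqrt[3]{\epsilon_{D'}\log D'})$, the solution $D'=\tilde O(|X|^{c_\alpha}\delta^{-3\alpha/(1-\alpha)})$, and the final $D=O(D'm)=\tilde O(\delta^{-1-3\alpha/(1-\alpha)-c_\alpha})$ once $|X|\le m=\tilde O(1/\delta)$) is correct. But the fix you sketch for the obstacle does not work, and it is precisely the missing step that carries the content of the corollary. In the construction of Lemma~\ref{trunc}, $\epsilon_{D'}$ is \emph{already} measured only over the $O(m)$ cuts inside the sliding window --- the sum bounding $\||\psi\rangle-|u_i\rangle\|^2$ runs only over $k$ from $\max\{i,1\}$ to $\min\{i+m,n-1\}$ --- so ``restricting to the cuts inside the window'' changes nothing. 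The problem is that each such cut is still the global bipartition $\mathcal H_{[1,k]}\otimes\mathcal H_{[k+1,n]}$, whose Renyi entropy the hypothesis only bounds by $c_\alpha\ln\min\{k,n-k\}+O(1)$, which is $\Theta(\log n)$ for bulk cuts. The fact that the window $L$ itself has $R_\alpha(\rho_L)\le c_\alpha\ln m+O(1)$ does not bound the entropy of the prefix $[1,k]$; indeed one can build states (e.g.\ weakly entangled ``rainbow'' states pairing qudit $j$ with qudit $n+1-j$ with entropy $\sim c_\alpha/j$ per pair) that satisfy the contiguous-region log area law yet have bulk left--right cut entropies of order $\ln n$. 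So pinning $|X|$ down to $m$ is not a refinement of Lemma~\ref{trunc}'s error bookkeeping; it requires changing the construction itself.

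The paper's proof sketch --- ``reduce the proof of Theorem~\ref{2d} from 2D to 1D'' --- is the needed idea: for each window $L$ of $m$ consecutive qudits, re-express $|\psi\rangle$ as an exact MPS in a \emph{reordered} basis in which the qudits of $L$ come first and the remaining $n-m$ qudits are appended afterwards, and truncate only the first $m$ bonds. Each truncated bond now separates a contiguous sub-interval $X\subseteq L$ with $|X|\le m$ from its complement, so the hypothesis applies with $R_\alpha\le c_\alpha\ln m+O(1)$ and Lemma~\ref{area} gives the per-cut error $O(m^{(1-\alpha)c_\alpha/\alpha}D'^{-(1-\alpha)/\alpha})$ you wanted. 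The Schmidt rank across the cut $L$ versus $\overline L$ is then $\le D'$, so the state compresses isometrically onto $L$ plus a buffer of $b=\lceil\log_dD'\rceil$ adjacent qudits, and the random unitaries, Markov/union bound, and the uniform superposition over the $O(m)$ window offsets go through exactly as in Theorem~\ref{2d}, with $m=\tilde O(1/\delta)$ and $D'=\tilde O(\delta^{-3\alpha/(1-\alpha)-c_\alpha})$. With that substitution in place, the rest of your writeup is a correct rerun of Theorem~\ref{t1}.
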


\begin{IEEEproof}
We adapt the proof of Theorem \ref{2d} from 2D to 1D by taking $Y$ and $Y'$ to be contiguous regions of $m$ and $m+b$ qudits, respectively. We modify the parameters to
\begin{equation}
m=\tilde O(1/\delta),\quad D'=\tilde O\left(\delta^{-\frac{3\alpha}{1-\alpha}-c_\alpha}\right)
\end{equation}
and adjust (\ref{eq:c}) accordingly.
\end{IEEEproof}

\section*{Acknowledgment}

During the revision of this paper, I used Gemini 2.5 Pro Preview, Gemini 3 Pro Preview, Gemini 3.1 Pro Preview, and OpenAI o4-mini to improve the presentation. I also used Gemini 3 Pro Preview and Gemini 3.1 Pro Preview to help verify the results. I would like to thank Google DeepMind for providing Gemini API credits.

\bibliographystyle{IEEEtran}
\bibliography{jrnl}

\begin{thebibliography}{10}
\providecommand{\url}[1]{#1}
\csname url@samestyle\endcsname
\providecommand{\newblock}{\relax}
\providecommand{\bibinfo}[2]{#2}
\providecommand{\BIBentrySTDinterwordspacing}{\spaceskip=0pt\relax}
\providecommand{\BIBentryALTinterwordstretchfactor}{4}
\providecommand{\BIBentryALTinterwordspacing}{\spaceskip=\fontdimen2\font plus
\BIBentryALTinterwordstretchfactor\fontdimen3\font minus
  \fontdimen4\font\relax}
\providecommand{\BIBforeignlanguage}[2]{{%
\expandafter\ifx\csname l@#1\endcsname\relax
\typeout{** WARNING: IEEEtran.bst: No hyphenation pattern has been}%
\typeout{** loaded for the language `#1'. Using the pattern for}%
\typeout{** the default language instead.}%
\else
\language=\csname l@#1\endcsname
\fi
#2}}
\providecommand{\BIBdecl}{\relax}
\BIBdecl

\bibitem{DB19}
A.~M. Dalzell and F.~G. S.~L. Brand{\~{a}}o, ``Locally accurate {MPS}
  approximations for ground states of one-dimensional gapped local
  {H}amiltonians,'' \emph{Quantum}, vol.~3, p. 187, 2019.

\bibitem{Hua19}
Y.~Huang, ``Matrix product state approximations: {B}ringing theory closer to
  practice,'' \emph{Quantum Views}, vol.~3, p.~26, 2019.

\bibitem{PVWC07}
D.~Perez-Garcia, F.~Verstraete, M.~M. Wolf, and J.~I. Cirac, ``Matrix product
  state representations,'' \emph{Quantum Information and Computation}, vol.~7,
  no. 5-6, pp. 401--430, 2007.

\bibitem{FNW92}
M.~Fannes, B.~Nachtergaele, and R.~F. Werner, ``Finitely correlated states on
  quantum spin chains,'' \emph{Communications in Mathematical Physics}, vol.
  144, no.~3, pp. 443--490, 1992.

\bibitem{VC04}
F.~Verstraete and J.~I. Cirac, ``Renormalization algorithms for quantum-many
  body systems in two and higher dimensions,'' arXiv:cond-mat/0407066.

\bibitem{Sch11}
U.~Schollw{\"o}ck, ``The density-matrix renormalization group in the age of
  matrix product states,'' \emph{Annals of Physics}, vol. 326, no.~1, pp.
  96--192, 2011.

\bibitem{Whi92}
S.~R. White, ``Density matrix formulation for quantum renormalization groups,''
  \emph{Physical Review Letters}, vol.~69, no.~19, pp. 2863--2866, 1992.

\bibitem{Whi93}
------, ``Density-matrix algorithms for quantum renormalization groups,''
  \emph{Physical Review B}, vol.~48, no.~14, pp. 10\,345--10\,356, 1993.

\bibitem{Has07}
M.~B. Hastings, ``An area law for one-dimensional quantum systems,''
  \emph{Journal of Statistical Mechanics: Theory and Experiment}, vol. 2007,
  no.~08, p. P08024, 2007.

\bibitem{AKLV13}
I.~Arad, A.~Kitaev, Z.~Landau, and U.~Vazirani, ``An area law and
  sub-exponential algorithm for {1D} systems,'' arXiv:1301.1162.

\bibitem{Hua14a}
Y.~Huang, ``Area law in one dimension: {D}egenerate ground states and {R}enyi
  entanglement entropy,'' arXiv:1403.0327.

\bibitem{LVV15}
Z.~Landau, U.~Vazirani, and T.~Vidick, ``A polynomial time algorithm for the
  ground state of one-dimensional gapped local {H}amiltonians,'' \emph{Nature
  Physics}, vol.~11, no.~7, pp. 566--569, 2015.

\bibitem{Hua14b}
Y.~Huang, ``A polynomial-time algorithm for the ground state of one-dimensional
  gapped {H}amiltonians,'' arXiv:1406.6355.

\bibitem{CF16}
C.~T. Chubb and S.~T. Flammia, ``Computing the degenerate ground space of
  gapped spin chains in polynomial time,'' \emph{Chicago Journal of Theoretical
  Computer Science}, vol. 2016, p.~9, 2016.

\bibitem{Hua15c}
Y.~Huang, ``A simple efficient algorithm in frustration-free one-dimensional
  gapped systems,'' arXiv:1510.01303.

\bibitem{ALVV17}
I.~Arad, Z.~Landau, U.~Vazirani, and T.~Vidick, ``Rigorous {RG} algorithms and
  area laws for low energy eigenstates in {1D},'' \emph{Communications in
  Mathematical Physics}, vol. 356, no.~1, pp. 65--105, 2017.

\bibitem{SWVC08}
N.~Schuch, M.~M. Wolf, F.~Verstraete, and J.~I. Cirac, ``Entropy scaling and
  simulability by matrix product states,'' \emph{Physical Review Letters}, vol.
  100, no.~3, p. 030504, 2008.

\bibitem{McC08}
I.~P. McCulloch, ``Infinite size density matrix renormalization group,
  revisited,'' arXiv:0804.2509.

\bibitem{Vid07}
G.~Vidal, ``Classical simulation of infinite-size quantum lattice systems in
  one spatial dimension,'' \emph{Physical Review Letters}, vol.~98, no.~7, p.
  070201, 2007.

\bibitem{Hua15a}
Y.~Huang, ``Computing energy density in one dimension,'' arXiv:1505.00772.

\bibitem{SV17}
N.~Schuch and F.~Verstraete, ``Matrix product state approximations for infinite
  systems,'' arXiv:1711.06559.

\bibitem{VGC04}
F.~Verstraete, J.~J. Garc\'{\i}a-Ripoll, and J.~I. Cirac, ``Matrix product
  density operators: Simulation of finite-temperature and dissipative
  systems,'' \emph{Physical Review Letters}, vol.~93, no.~20, p. 207204, 2004.

\bibitem{ZV04}
M.~Zwolak and G.~Vidal, ``Mixed-state dynamics in one-dimensional quantum
  lattice systems: A time-dependent superoperator renormalization algorithm,''
  \emph{Physical Review Letters}, vol.~93, no.~20, p. 207205, 2004.

\bibitem{KGE14}
M.~Kliesch, D.~Gross, and J.~Eisert, ``Matrix-product operators and states:
  {NP}-hardness and undecidability,'' \emph{Physical Review Letters}, vol. 113,
  no.~16, p. 160503, 2014.

\bibitem{ECP10}
J.~Eisert, M.~Cramer, and M.~B. Plenio, ``Colloquium: {A}rea laws for the
  entanglement entropy,'' \emph{Reviews of Modern Physics}, vol.~82, no.~1, pp.
  277--306, 2010.

\bibitem{Osb12}
T.~J. Osborne, ``Hamiltonian complexity,'' \emph{Reports on Progress in
  Physics}, vol.~75, no.~2, p. 022001, 2012.

\bibitem{GHLS15}
S.~Gharibian, Y.~Huang, Z.~Landau, and S.~W. Shin, ``Quantum {H}amiltonian
  complexity,'' \emph{Foundations and Trends in Theoretical Computer Science},
  vol.~10, no.~3, pp. 159--282, 2015.

\bibitem{Hua15b}
Y.~Huang, ``Classical simulation of quantum many-body systems,'' Ph.D.
  dissertation, University of California, Berkeley, 2015.

\bibitem{GE16}
Y.~Ge and J.~Eisert, ``Area laws and efficient descriptions of quantum
  many-body states,'' \emph{New Journal of Physics}, vol.~18, no.~8, p. 083026,
  2016.

\bibitem{Hua20ISIT}
Y.~Huang, ``2{D} {L}ocal {H}amiltonian with area laws is {QMA}-complete,'' in
  \emph{2020 IEEE International Symposium on Information Theory}, 2020, pp.
  1927--1932.

\bibitem{Hua21JCP}
------, ``Two-dimensional local {H}amiltonian problem with area laws is
  {QMA}-complete,'' \emph{Journal of Computational Physics}, vol. 443, p.
  110534, 2021.

\bibitem{HLW94}
C.~Holzhey, F.~Larsen, and F.~Wilczek, ``Geometric and renormalized entropy in
  conformal field theory,'' \emph{Nuclear Physics B}, vol. 424, no.~3, pp.
  443--467, 1994.

\bibitem{VLRK03}
G.~Vidal, J.~I. Latorre, E.~Rico, and A.~Kitaev, ``Entanglement in quantum
  critical phenomena,'' \emph{Physical Review Letters}, vol.~90, no.~22, p.
  227902, 2003.

\bibitem{LRV04}
J.~I. Latorre, E.~Rico, and G.~Vidal, ``Ground state entanglement in quantum
  spin chains,'' \emph{Quantum Information and Computation}, vol.~4, no.~1, pp.
  48--92, 2004.

\bibitem{CC04}
P.~Calabrese and J.~Cardy, ``Entanglement entropy and quantum field theory,''
  \emph{Journal of Statistical Mechanics: Theory and Experiment}, vol. 2004,
  no.~06, p. P06002, 2004.

\bibitem{CC09}
------, ``Entanglement entropy and conformal field theory,'' \emph{Journal of
  Physics A: Mathematical and Theoretical}, vol.~42, no.~50, p. 504005, 2009.

\bibitem{Wol06}
M.~M. Wolf, ``Violation of the entropic area law for fermions,'' \emph{Physical
  Review Letters}, vol.~96, no.~1, p. 010404, 2006.

\bibitem{GK06}
D.~Gioev and I.~Klich, ``Entanglement entropy of fermions in any dimension and
  the {W}idom conjecture,'' \emph{Physical Review Letters}, vol.~96, no.~10, p.
  100503, 2006.

\bibitem{Hua20}
Y.~Huang, ``Computing local properties in the trivial phase,''
  arXiv:2001.10763.

\bibitem{VC06}
F.~Verstraete and J.~I. Cirac, ``Matrix product states represent ground states
  faithfully,'' \emph{Physical Review B}, vol.~73, no.~9, p. 094423, 2006.

\end{thebibliography}

\begin{IEEEbiographynophoto}{Yichen Huang}
received the B.Sc. degree in mathematics and physics from Tsinghua University in 2010, and the Ph.D. degree in physics from the University of California at Berkeley, in 2015. He was an IQIM Postdoctoral Scholar with the California Institute of Technology from 2015 to 2018, an Associate Researcher with Microsoft Research AI from 2018 to 2019, a Senior Postdoctoral Associate with the Center for Theoretical Physics, Massachusetts Institute of Technology from 2019 to 2022, and a Postdoctoral Fellow with the Department of Physics, Harvard University from 2022 to 2024. His research interests include artificial intelligence, quantum information theory, and condensed matter theory.
\end{IEEEbiographynophoto}

\end{document}